 \newcommand{\eps}{\varepsilon}
 \newcommand{\R}{\mathbb{R}}
 \newcommand{\N}{\mathbb{N}}
 \newcommand{\prob}{\mathbb{P}}
 \newcommand{\me}{\mathbb{E}}
 \renewcommand{\P}{\mathbb{P}}
 \newcommand{\skric}{{\mathcal C}}
 \newcommand{\skrid}{{\mathcal D}}
 \newcommand{\skrim}{{\mathcal M}}
 \newcommand{\skrin}{{\mathcal N}}
 \newcommand{\skrix}{{\mathcal X}}
 \newcommand{\sfrac}[2]{\mbox{$\frac{#1}{#2}$}}
\def\1{{\mathchoice {1\mskip-4mu\mathrm l}      
{1\mskip-4mu\mathrm l} {1\mskip-4.5mu\mathrm l} {1\mskip-5mu\mathrm
l}}}
\newcommand{\eq}{\begin{equation}}
\newcommand{\en}{\end{equation}}
\newenvironment{Proof}
{\vskip0.1cm\noindent{\bf Proof. }{\hspace*{0.3cm}}}%
{\nopagebreak {\hspace*{\fill}\rule{2mm}{2mm}}\\ }
\renewcommand{\subsection}{\secdef \subsct\sbsect}
\newcommand{\subsct}[2][default]{\refstepcounter{subsection}
\vspace{0.15cm} {\flushleft\bf
\arabic{section}.\arabic{subsection}~\bf #1  }
\nopagebreak\nopagebreak}
\newcommand{\sbsect}[1]{\vspace{0.1cm}\noindent
{\bf #1}\vspace{0.1cm}}
\newtheorem{theorem}{Theorem}[section]
\newtheorem{lemma}[theorem]{Lemma}
\newtheoremstyle{thm}{1.5ex}{1.5ex}{\itshape\rmfamily}{}
{\bfseries\rmfamily}{}{2ex}{}
\newtheoremstyle{rem}{1.3ex}{1.3ex}{\rmfamily}{}
{\itshape\rmfamily}{}{1.5ex}{} \theoremstyle{rem}
\def\thebibliography#1{\section*{References}
  \list%
  {\arabic{enumi}.}
    {\settowidth\labelwidth{[#1]}\leftmargin\labelwidth
    \advance\leftmargin\labelsep
    \parsep0pt\itemsep0pt
    \usecounter{enumi}}
    \def\newblock{\hskip .11em plus .33em minus .07em}
    \sloppy                   
    \sfcode`\.=1000\relax}
\begin{document}
\begin{center}
{\LARGE{ Large deviations, Basic information theorem for fitness
preferential
attachment random networks }}\\[20pt]
\end{center}

\centerline{\sc{By K. Doku-Amponsah, F. O. Mettle and  T.
Narh-Ansah}}
\renewcommand{\thefootnote}{}
\footnote{\textit{Mathematics Subject Classification :}     4A15;
    94A24;
    60F10;
    05C80.}

\footnote{University  of   Ghana, Statistics Department, P. O. Box
lg 115, Accra,kdoku@ug.edu.gh}
\renewcommand{\thefootnote}{}

\textbf{Abstract}

For fitness preferential  attachment   random networks, we define
the \emph{ empirical degree and pair measure}, which counts  the
number of vertices of a given degree and the  number of edges with
given fits, and the sample path \emph{empirical degree
distribution}.
 For  the empirical degree and pair distribution for the
 fitness preferential  attachment  random networks, we  find   a large  deviation upper  bound. From this  result we obtain a   weak  law  of  large
 numbers for the empirical degree and pair distribution, and  the   basic information theorem or an asymptotic equipartition
 property
 for fitness
 preferential  attachment  random networks.

\textbf{Keywords:} Large deviation upper bound, relative entropy,
random network, random  tree, random coloured graph, typed graph,
asymptotic equipartition property.

\textbf{1. Introduction}

This  paper establishes an asymptotic  equipartition property (AEP)
for  fitness preferential  attachment (P.A) random  networks. The
AEP is an important characteristics used often in information theory
to partition output samples of a stochastic data source. See,
example  (Doku-Amponsah, 2010)  and the references therein for
similar result for  networked  datasets modelled as   coloured
random graphs or random fields.

In  the past three decades technological advances in  the Social
Sciences, Web Science and related fields have yielded large amounts
of diverse networked datasets which are  best described in terms of
the preferential attachment graphs. Example the WWW, consisting of
over 800 million documents (vertices) and a large number  of links (
edges) pointing from one document to another, is best model by
preferential attachment graphs. See,( Lawrence and  Giles,
1998,1999). In order to transmit or compress  datasets from this
random network source, one require efficient coding schemes and
approximate pattern matching algorithms, and the AEP for P.A
networks   play a key role in this regard, example by providing
bounds on the possible performance of these schemes or algorithms.

P.A   models  can be easily defined and modified, and can therefore
be calibrated to serve as models for social networks and the web
graph. These graphs model fairly well the dynamics  of the
occurrence  of power law degree distributions in large networks. See
( Barabasi and Albert, 1999).

The main  ideal  behind the  P.A  models  is  that  growing networks
are constructed  by  adding  nodes successively. If  a  new  node
appears, it  gets a  fit or  colour or symbol or  spin according  to
some law $\mu$ on a finite  alphabet and it is linked by an edge to
one or more existing node(s) with a probability proportional  to
function of their degree and fits. The dynamics of    the graph  is
completely determined by the function $f$ known as the attachment
rule.

There  are three  regime  of  P.A  graphs. Namely, for  \emph{linear
regime}: $f(k)\approx k$,\emph{sublinear regime}:$f(k)\le k$ and
\emph{superlinear regime}:$f(k)\ge k.$ Several results  about  the
asymptotic behaviour  of  these   graphs have been established
recently.

Few  large  deviation results for  P.A  model have so far been
found. In article ( Choi et. al, 2011), P.A schemes where the
selection mechanism is possibly time-dependent are considered, and
an in infinite dimensional large deviation principle for the sample
path evolution of the empirical degree distribution is found by
Dupuis-Ellis type methods.

(Dereich and  Morters, 2009) studied a dynamic model of random
networks, where new vertices are connected to old ones with a
probability proportional to a sublinear function of their degree.
For this model of  random  networks, they  obtained a strong limit
law for the empirical degree distribution. Results on the temporal
evolution of the degrees of individual vertices via large and
moderate deviation principles  were also found.

(Bryc et. al, 2009)  found  the large  deviation principle and
related results  for a class of Markov chains associated to the
`leaves'in P.A model  of  random  graphs using both analytic and
Dupuis--Ellis-type path arguments.

In this  article,  we    prove  a large deviation upper  bound  for
the empirical  degree and pair distribution, and use  it  to  find
an AEP for for P.A models of random graphs in the linear regime $f.$
i.e. $f(k)\approx k.$ Our proofs use  the techniques  of exponential
change change-of-measure for random graphs, see example (Dembo et.
al, 2003), (Doku-Amponsah, 2006), (Doku-Amponsah and Morters, 2010),or ( Doku-Amponsah, 2010). 

To  be  specific, we  prove a  large deviation upper,  see
 Theorem~\ref{PA3}, for the empirical degree and  pair distribution  of the
fitness P.A model  of  random  graphs. For a given, empirical degree
and  pair distribution we prove  from the large deviation upper
bound a weak law  of  large  numbers,see
 Theorem~\ref{PA5}. And  from this weak  law  of  large  numbers  we find  the AEP for  a  networked structure  datasets model,
see Theorem~\ref{PA41},  as  a fitness P.A model of random graphs.

\textbf{2. Large deviation  upper  bound for P.A random graphs}

We write $\skrin=\N\cup\{0\}.$ Given a weight function
$f:\skrin\times\skrix\to[0,\,\infty]$ and a probability law  $\mu$
on  finite alphabet $\skrix,$ we define coloured(fitness) P.A random
network as follows:
\begin{itemize}
\item Assign vertex $n=1$ (the root of the network) colour $X(n)$
according to $\mu:\skrix\to[0,\,1].$
\item If a new vertex $n$ is introduced, it gets colour $X(n)$ independently according $\mu,$
\item it connects to vertices  $v_n\in\{\,1,\ldots,n-1\,\}$ independently with
probability proportional to $$ f(N(v_n),A(n)),$$  where
$A(n)=\big(X(v_n),X(n)\big)$
 and $N(m)$ is the in-degree of vertex $m.$ We consider
$\big\{(N(v_n),A(n)):\,n=1,2,3,\ldots\big\}$ under the joint law of
colour and tree. Denote by $X$ a typed tree and by $X(i)$ colour of
vertex $i.$
\end{itemize}

We  write $\skrix^*=\skrix\times\skrix.$ In  this  paper, we shall
restrict ourself to functions of the form
$$f(k,a)=\gamma(a)k+\beta(a),$$ where
$\gamma:\skrix^* \to(0,\,\infty]$, $\beta:\skrix^*\to[0,\,\infty].$
We assume
\begin{equation}\label{equ.one} \gamma(a)+\beta(a):=c,\, \mbox{ for
all $a\in\skrix.$}
\end{equation}
and  that the function $f$ satisfy the following  weak preference
condition:
\begin{equation}\label{persistentEqPA}
\inf_{a\in\skrix}\sum_{k=0}^{\infty}\sfrac{1}{f(k,a)}=\infty.
\end{equation}

 Let
$N^{(m)}(i)$ be the degree of vertex $i$ at time $m$ and observe
that at time $n,$ the law of the fitness  P.A graph is given by
$$\begin{aligned}
\P_{f}^{(n)}(X)=\prod_{m=1}^{n}\mu(X(m))\times &
\prod_{m=2}^{n}\frac{f(N^{(m)}(j_m),
\,A(m))}{\sum_{i=1}^{m-1}f(N^{(m)}(i),\,A(i)).}
\end{aligned}
$$

 For every $X,$ we define empirical degree  and  pair  measure
measure $M_{X}$ on $\skrin\times\skrix^*$ by
$$M_{X}(k,\,a)=\frac{1}{n-1}\sum_{m=1}^{n-1}\delta_{(N^{(m)}(j_m),A(m))}(k,\,a).$$

We  write  $\ell_m(a)=\big\{ j_m\in\{1,2,3,...,m-1\}:
x(j_m)=a_1,\,x(m)=a_2\big\}$  and  for every $m=2,3,4,...,n-1$ we
define a probability measure on $\skrin\times\skrix^{*}$ by

$$ L_{\sfrac{m}{n}}^{X}(k, a)=\frac{1}{m-1}\sum_{j=1}^{m-1}\delta_{N^{(m)}(j)} (k)\1_{\{j\in\ell_m(A(m)\}}\otimes\delta_{A(m)}(a), $$

where
$$\begin{aligned}\label{Eqdef1}
\1_{\{j\in\ell_m(b)\}}\otimes\delta_{b}(a)=\left\{
\begin{array}{ll}  \1_{\{j\in\ell_m(b)\}}\,
& \mbox { if $b=a,$  }\\
0 & \mbox{otherwise.}
\end{array}\right.
\end{aligned}$$

and notice, $$L_{1}^{X}(k , a)=M_{X}(k,a).$$  We denote by
$\skrim(\skrix)$ the space of probability measures on $\skrix$
equipped with the weak topology and $\skrim(\skrin\times\skrix^{*})$
the space of probability measures on $\skrin\times\skrix^*,$
equipped with the topology generated by total variation
metric.$$\|\pi-\hat{\pi}\|:=\frac{1}{2}\sum_{(k,a)\in
\skrin\times\skrix^{*}}\|\pi(k,a)-\hat{\pi}(k,a)\|.$$

 We are now in the
position to state our  large deviation upper bound  for the fitness
P.A
 model  of  random  graphs.  We write
 $\hat{\omega}(k \,| \,a):=\1-\sum_{j=0}^{k}\omega(k\,|\,a)$  and  state  our
 large  deviation  upper bound  for   the empirical  pair  measure.

\begin{theorem}\label{PA3}
Suppose $X$ is coloured P.A random graph with colour law
$\mu:\skrix\to(0,1]$ and linear weight function
$f:\skrin\times\skrix^*\to[0,\infty].$ Then, for  any  close
$\Gamma\subset\skrim(\skrin\times\skrix^*),$

$$
\limsup_{n\to\infty}\frac{1}{n}\log\P\Big\{M_X\in\Gamma\big\}\le-\inf_{\omega\in\Gamma}J(\omega),$$

$$J(\omega)=H\Big(\omega_{2,1}\,\|\,\mu\Big)
+\sum_{a\in\skrix}\omega_{2,1}(a)H\Big(\omega(\cdot|a)\,\|\,\sfrac{c}{f(\cdot,\,a)}\otimes\hat{\omega}(\cdot|\,a)\Big),$$

where $\omega_{2,1}$  is  the $\skrix-$ marginal of the probability
measure $\omega_2$ and
$$\sfrac{c}{f(\cdot,\,a)}\otimes\hat{\omega}(\cdot|\,a)(k)
=\frac{c}{f(k,\,a)}\hat{\omega}(k\,|\,a).$$

\end{theorem}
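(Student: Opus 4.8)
The plan is to prove this large deviation upper bound by the standard exponential change-of-measure (tilting) technique for random graphs, adapted to the sequential structure of the preferential attachment dynamics. First I would fix a candidate ``tilted'' probability measure indexed by a function $\tilde g:\skrin\times\skrix^*\to\R$ and an auxiliary colour law $\tilde\mu$ on $\skrix$, and compute the Radon--Nikodym derivative $\dd\P_{\tilde f}^{(n)}/\dd\P_f^{(n)}(X)$ explicitly. Because $\P_f^{(n)}(X)$ factors as a product over $m=1,\dots,n$ of the colour contribution $\mu(X(m))$ and over $m=2,\dots,n$ of the conditional attachment probability $f(N^{(m)}(j_m),A(m))/\sum_{i=1}^{m-1}f(N^{(m)}(i),A(i))$, the log-likelihood ratio will split into (i) a term of the form $\sum_m \log(\tilde\mu/\mu)(X(m))$, which by definition of the $\skrix$-marginal equals $(n-1)\langle M_X^{\mathrm{colour}},\log(\tilde\mu/\mu)\rangle$ up to boundary terms, and (ii) a term involving $\sum_m \tilde g(N^{(m)}(j_m),A(m))$ minus $\sum_m \log$ of the ratio of normalising sums. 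The first of (ii) is again an integral against $M_X$ by definition of the empirical degree and pair measure.

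The key step is controlling the normalisation: $\sum_{m=2}^n\log\big(\sum_{i=1}^{m-1}\tilde f(N^{(m)}(i),A(i))\big/\sum_{i=1}^{m-1}f(N^{(m)}(i),A(i))\big)$. Here I would use the linearity $f(k,a)=\gamma(a)k+\beta(a)$ together with the normalisation \eqref{equ.one} $\gamma(a)+\beta(a)=c$: this makes $\sum_{i=1}^{m-1}f(N^{(m)}(i),A(i))$ expressible through the colour counts and the total in-degree up to time $m$ (which is itself deterministic, $=m-1$ at time $m$ for a tree, or controlled), so that the ratio of sums is asymptotically a constant depending only on the tilt and the empirical colour frequencies. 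Choosing the tilt so that $\tilde f(k,a)= f(k,a)\cdot e^{\tilde g(k,a)}$ corresponds exactly to reweighting $\sfrac{c}{f(\cdot,a)}\otimes\hat\omega(\cdot|a)$ toward $\omega(\cdot|a)$, and choosing $\tilde\mu$ to interpolate between $\mu$ and $\omega_{2,1}$, the exponent produced by the change of measure will, on the event $\{M_X\approx\omega\}$, converge to $\langle\omega_{2,1},\log(\omega_{2,1}/\mu)\rangle+\sum_a\omega_{2,1}(a)\langle\omega(\cdot|a),\log(\omega(\cdot|a)/(\sfrac{c}{f(\cdot,a)}\otimes\hat\omega(\cdot|a)))\rangle=J(\omega)$. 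This identifies $J(\omega)$ as the Legendre-type rate obtained by optimising over tilts.

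From there I would run the usual upper bound argument: for a closed set $\Gamma$, cover it (using compactness of $\skrim(\skrix)$ and lower semicontinuity of $J$, or work directly with a fixed $\omega$ and a small total-variation ball around it) and write $\P\{M_X\in\Gamma\}=\E_{\tilde f}\big[\1_{\{M_X\in\Gamma\}}\,\dd\P_f^{(n)}/\dd\P_{\tilde f}^{(n)}\big]\le e^{-(n-1)(\inf_\Gamma J(\omega)-\eps)}$ once the log-likelihood ratio is shown to be close to $-(n-1)J(\omega)$ uniformly on the ball, plus an error that is $o(n)$ and a probability factor bounded by $1$. Taking $\tfrac1n\log$, $\limsup$, and then letting the ball radius and $\eps$ shrink yields the claim; the Weierstrass-type argument over a finite subcover handles a general closed $\Gamma$. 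I expect the main obstacle to be the bookkeeping in the normalisation term: showing that $\frac1n\sum_{m=2}^n\log\big(\sum_i \tilde f(N^{(m)}(i),A(i))/\sum_i f(N^{(m)}(i),A(i))\big)$ really is controlled by (an integral against) $M_X$ up to $o(1)$, which is where the linear form of $f$, the constraint \eqref{equ.one}, and the weak preference condition \eqref{persistentEqPA} (ensuring degrees and hence the relevant sums do not blow up pathologically) all have to be used; the entropy/relative-entropy identification and the topological wrap-up are comparatively routine.
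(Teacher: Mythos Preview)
Your change-of-measure skeleton is correct, and the paper uses the same tilting device. The gap is precisely where you flagged it: the normalisation term cannot be written as a functional of $M_X$ alone. The linearity of $f$ together with \eqref{equ.one} does give you $\sum_{i=1}^{m-1} f(N^{(m)}(i),a)=c(m-1)$ for the \emph{untilted} denominator (since $\sum_i N^{(m)}(i)=m-1$), but for any non-affine tilt $\tilde g$ the tilted sum $\sum_{i=1}^{m-1}\tilde f(N^{(m)}(i),a)$ genuinely depends on the full degree profile at time $m$, i.e.\ on $L^X_{m/n}$, not just on the terminal empirical measure $M_X=L^X_1$. Your sentence ``the ratio of sums is asymptotically a constant depending only on the tilt and the empirical colour frequencies'' is therefore false for the class of tilts you need, and the rest of the argument (bounding the log-likelihood on a ball in $M_X$-space) does not go through.

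The paper's fix is to introduce an extra layer: condition on the entire sample path $(L^X_{[nt]/n})_{t\in(0,1]}$. Given the path, the normalisation term $\sum_m \log\big(\langle e^{\tilde g}/f,\,L^X_{m/n}\rangle/\langle f,\,L^X_{m/n}\rangle\big)$ becomes deterministic, and your Chebyshev/covering argument yields a \emph{conditional} upper bound with rate $\hat K_\nu(\omega)$ depending on the path $\nu$ (Lemma~\ref{PA8}). One then checks exponential tightness of $M_X$ given the path (Lemma~\ref{PA5}) and of the joint law (Lemma~\ref{Mix}), and invokes Biggins' mixing theorem for large deviations to remove the conditioning. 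The last ingredient you are missing is the identification of $\hat\omega(\cdot\mid a)$ in the rate: it enters via the inequality $\int_0^1 \nu_t(k\mid a)\,dt \le 1-\sum_{i=0}^{k}\nu_1(i\mid a)=\hat\omega(k\mid a)$, which follows from the monotone dynamics of the degree counts along the path and is what converts the path-dependent rate $K_\nu(\omega)$ into the path-free $J(\omega)$.
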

Observe  that   $J(\omega)=0$   if  and  only  if
$\omega(k,a)=\frac{c\omega_2(a)}{f(k,\,a)}\big(\1-\sum_{j=0}^{k}\omega(k\,|\,a)\big),$
and  hence solving  recursively  for   $\omega(\cdot\,|\,a)$  we get
\begin{equation}\label{Rudas}
\omega(k\,|a)=\pi_f(k\,|a):=\frac{c}{c+f(k,a)}\prod_{i=0}^{k-1}\frac{f(i,a)}{c+f(i,a)}.
\end{equation}

Here we  remark  that  conditions \eqref{equ.one} and
\eqref{persistentEqPA}  are necessary  for $\pi_f(\cdot\,|a)$ to  be
a probability  measure on $\skrin$.  See (Dereich  and Morters,
2009, ~p.~13). Note, if  $f(k,a)=w(k)$  then \eqref{Rudas} concise
with  the asymptotic degree distribution  of random  trees and
general branching processes found in (Rudas et. al, 2008).

\textbf{3. Basic  Information Theorem for fitness P.A random
networks}

Our  main  theorem  is  the AEP for networked datasets modelled as
fitness P.A graph.  In  this section, we state the AEP for networked
data structure described by fitness P.A graphs. By $P$ we denote the
(probability) law of a fitness P.A  graph with $n$ vertices. Thus we
write
$$P(x):=\P\Big\{X=x\Big\},\,\mbox{  for each fitness P.A graph
$x.$}$$

\begin{theorem}\label{PA4}
Suppose $X$ is coloured P.A random graph with colour law
$\mu:\skrix\to(0,1]$ and linear weight function
$f:\skrin\times\skrix^*\to[0,\infty].$ Then, for  any $\eps>0,$

$$
\lim_{n\to\infty}\P\Big\{\Big|\sfrac{1}{n}\log
P(X)-\sum_{a_1\in\skrix}\mu(a_1)\log
\mu(a_1)-\sum_{(k,\,a)\in\skrin\times\skrix^*}\mu\otimes
\mu(a)\pi_f(k\,|a)\log f(k,\,a)/c\Big|\ge \eps\Big\}=0.$$
\end{theorem}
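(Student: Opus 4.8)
The plan is to derive the AEP from the large deviation upper bound of Theorem~\ref{PA3}, passed through the weak law of large numbers of Theorem~\ref{PA5}, by writing $\sfrac{1}{n}\log P(X)$ as an almost-continuous functional of the empirical colour law and of the empirical degree-and-pair measure $M_X$ and letting these converge to their deterministic limits. I would begin by taking logarithms in the product formula for $\P_f^{(n)}$ and splitting
$$\sfrac{1}{n}\log P(X)=\underbrace{\sfrac{1}{n}\sum_{m=1}^n\log\mu(X(m))}_{=:T_1}\;+\;\underbrace{\sfrac{1}{n}\sum_{m=2}^n\log f\big(N^{(m)}(j_m),A(m)\big)}_{=:T_2}\;-\;\underbrace{\sfrac{1}{n}\sum_{m=2}^n\log\!\Big(\textstyle\sum_{i=1}^{m-1}f\big(N^{(m)}(i),A(i)\big)\Big)}_{=:T_3}.$$

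The first term is immediate: since $X(1),\dots,X(n)$ are i.i.d.\ $\mu$ on the finite set $\skrix$, one has $T_1=\sum_{a_1\in\skrix}L^{X}_{\mathrm{col}}(a_1)\log\mu(a_1)$ with $L^{X}_{\mathrm{col}}$ the empirical colour law, and $L^{X}_{\mathrm{col}}\to\mu$ in probability gives $T_1\to\sum_{a_1}\mu(a_1)\log\mu(a_1)$. For the main term I would observe that, up to a boundary correction of order $1/n$, $T_2=\int_{\skrin\times\skrix^{*}}\log f\,dM_X$. Theorem~\ref{PA5} gives $M_X\to\omega^{*}$ in total variation, in probability, where $\omega^{*}$ is the zero of $J$, i.e.\ $\omega^{*}(k,a)=\mu\otimes\mu(a)\,\pi_f(k\,|a)$ by \eqref{Rudas}. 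Since $\log f(k,a)=\log(\gamma(a)k+\beta(a))$ grows like $\log k$, the map $\omega\mapsto\int\log f\,d\omega$ is not continuous on $\skrim(\skrin\times\skrix^{*})$, so I would truncate at a level $K$: the part $\int(\log f)\1_{\{k\le K\}}\,dM_X$ is a bounded continuous functional of $M_X$ to which the weak law applies at once, while $\int(\log f)\1_{\{k>K\}}\,dM_X$ is made uniformly small by combining the exponential decay of $\pi_f(\cdot|a)$ (which follows from \eqref{equ.one} and \eqref{persistentEqPA}) with the estimate for $\P\{M_X(\{k>K\}\times\skrix^{*})\ge\delta\}$ read off from Theorem~\ref{PA3}. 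Letting $n\to\infty$ and then $K\to\infty$ yields $T_2\to\sum_{(k,a)}\mu\otimes\mu(a)\,\pi_f(k\,|a)\log f(k,a)$.

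The term $T_3$ produces the constant $c$ and is where the real work is. Expanding $f(N^{(m)}(i),A(i))=\gamma(A(i))N^{(m)}(i)+\beta(A(i))$ and using that the typed tree on $m$ vertices has exactly $m-1$ edges, the inner sum can be rewritten in terms of the running measures $L^{X}_{m/n}$ and the empirical colour counts; condition \eqref{equ.one} then pins it to $c$ times the appropriate normalising factor, so that the Cesàro average $T_3$ converges to $\log c$. Making this precise needs more than the terminal statement $M_X\to\omega^{*}$: it requires convergence of $L^{X}_{m/n}$ that is uniform in $m=2,\dots,n$, which I would obtain either from a uniform-in-time refinement of Theorem~\ref{PA3} applied along a mesh of scales $m/n\in\{1/K,\dots,1\}$ together with monotonicity of the partial sums in $m$, or from a direct martingale concentration bound for the self-normalising sums. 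Assembling the three parts gives $\sfrac{1}{n}\log P(X)\to\sum_{a_1}\mu(a_1)\log\mu(a_1)+\sum_{(k,a)}\mu\otimes\mu(a)\,\pi_f(k\,|a)\log\big(f(k,a)/c\big)$ in probability, which is the claim.

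I expect the main obstacle to be $T_3$ together with the truncation step in $T_2$: these are exactly where boundedness and continuity break down, where the one-sided nature of Theorem~\ref{PA3} must be compensated by the identification of the minimiser of $J$, and where a statement about the terminal empirical measure has to be upgraded to control of time-averaged, unbounded, strongly correlated functionals along the entire growth of the network. Extracting the constant $c$ cleanly and uniformly over the time horizon is the step I would budget the most effort for.
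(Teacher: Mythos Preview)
Your proposal is correct and follows essentially the same route as the paper: the same three-term split of $\frac{1}{n}\log P(X)$, the weak law for $M_X$ (derived from Theorem~\ref{PA3} via Lemma~\ref{PA41}) to pass $T_1$ and $T_2$ to their limits, and condition~\eqref{equ.one} to collapse the normalisation $T_3$ to $\log c$. If anything you are more scrupulous than the paper, which sends $\langle\log f,M_X\rangle$ to its limit without any truncation and handles $T_3$ simply by writing the denominator as $(m-1)\langle f,L^X_{m/n}\rangle$ and asserting that the Riemann sum $\frac{1}{n}\sum_m\log\langle f,L^X_{m/n}\rangle$ converges to $\int_0^1\log\langle f,\nu_t\rangle\,dt=\log c$ through the sample-path measures $L^X$, without the uniform-in-time control you rightly flag.
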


In other words, in order to transmit a coloured P.A graph in the
given regime one needs with high probability, about
$$\frac{n}{\log 2}\Big[\sum_{a_1\in\skrix}\mu(a_1)\log
\mu(a_1)+\sum_{(k,\,a)\in\skrin\times\skrix^*}\mu\otimes
\mu(a)\pi_f(k\,|a)\log f(k,\,a)/c\Big]\,\mbox{ bits}.$$

\textbf{4. Proof  of  THeorem~\ref{PA3}}

\emph{4.1  Dynamics of  the  sample path empirical degree
distribution}

Denote by $\skrid([0,1],\R)$ the  space  of  right continuous left
limited(cadlag) paths  from $[0,1]$  to  $\R.$   We define  the
sample  path space
$$\begin{aligned}
&\skrid_{\skrim}:=D([0,1]:\skrim(\skrin\times\skrix))\\
&=\Big\{\mbox{the set of  all $\nu:[0,1]
\mapsto\skrim(\skrin\times\skrix)$ such that
$\nu(k,a)\in\skrid([0,1],\R)$ for  all  $k\ge 0, a\in\skrix$  and
$\langle \nu\rangle=1$}\Big\}
\end{aligned}
$$ and  endow  it with the  topology of  uniform convergence
associated with  the norm
 $$\|\nu-\hat{\nu}\|:=\sup_{t\in[0,1]}\|\nu_t-\hat{\nu}_t\|.$$

For  any $\nu\in\skrid_{\skrim}$ we  write
$\nu_t(k\,|a):=\sfrac{\nu_t(k,\,a)}{\sum_{k=0}^{\infty}\nu_t(k,\,a)},$
for  all $t\in[0,1]$  and $(k,a)\in\skrin\times\skrix.$  Write
$\dot{\nu}_{t}:=\frac{d\nu_t}{dt}$ for  the  time  derivative  of
 the  measure $\nu_t$ and we associate with each path $\nu\in\skrid_{\skrim}$  the relaxed
measure on $[0,1]\times(\skrin\times\skrix)$
 $$\bar{\nu}(dk,dt|a)=\nu_{t}(dk|a)dt.$$

We call $\nu\in\skrid_{\skrim}$ absolutely continuous if for  each
$k\in\N$, there  exists $\dot{\nu}(k|a)$ such that
$$\nu_1(k|a)-\nu_0(k|a)=\int_{0}^{1}\dot{\nu}_s(k|a)ds.$$
For  each absolutely continuous path $\nu$ ,  we define
$\nu^{\nu}(\cdot|a),$ \,$\bar{\nu}(\cdot,\cdot|a)$- almost
everywhere by

$$\nu_t^{\nu}(k|a):=-\sum_{i=0}^{k}\dot{\nu}_{t}(i|a).$$  By $\nu^{\nu}\ll\nu $
we mean $\nu$  is  absolutely  continuous. We  write
$$\skrid_{\skrim_n(\skrin\times\skrix)}:=\Big\{\nu\in\skrid_{\skrim(\skrin\times\skrix)}:\,
([nt]-1)\nu_{[nt]/n}\in\N,\,\forall t\in[0,1)\Big\}.$$

Note  that  the  measure  $L_{{[nt]}/{n}}^{X},$ for  $t\in[0,1)$ is
deterministic  and  its  distribution  is  degenerate  at  some
$\nu_{{[nt]}/{n}},$  for $t\in[0,1)$  converging  to $\nu_t,$
$t\in[0,1).$

\emph{4.2 Exponential Change-of- Measure}

 Throughout  the
remaining  part  of  this  paper  we  shall  assume $\nu_t(k|a)\le
\nu_t^{\nu}(k|a),$ for  all  $t\in [0,\,1].$

Let  $\tilde{g}: \N\times\skrix\to \R$, and write $\displaystyle
\lim_{n\to\infty}L_{\sfrac{[nt]}{n}}:=\nu_t\in\skrid_{\skrim},$ we
define the function $U_{\tilde{g}}:[0,\,1]\times \skrix\to \R$ by
$$ U_{\tilde{g}}^{(n)}\otimes\nu(a,t)=\log \frac{\langle
\sfrac{e^{\tilde{g}(\cdot,\,a)}}{f},\,\nu_{\sfrac{[nt]}{n}}(\cdot|a)\rangle}{\langle
f,\,\nu_{\sfrac{[nt]}{n}}(\cdot|a)\rangle},$$ and  note  that
$$\lim_{n\to\infty}U_{\tilde{g}}^{(n)}\otimes{\nu}(a,t)=\log\frac{\langle \sfrac{e^{\tilde{g}(\cdot,\,a)}}{f},\,\nu_{t}(\cdot|a)\rangle}{\langle
f,\,\nu_{t}(\cdot|a)\rangle}=:U_{\tilde{g}}\otimes{\nu}(a,t).$$ We
use $\tilde{g}$  to  define  a  new fitness P.A random  graph  as
follows:
\begin{itemize}
\item At  time  $m=1$ assign the root $m$ of the network fit $X(m)$  according  to  the  law
$\tilde{\mu}$ given  by
$$\tilde{\mu}(a_1)=e^{\tilde{h}(a_1)-U(\tilde{h})}\mu(a_1).$$
\item For  any  other time $m=2,3,4,....n$ new node $m$  which appear  gets fit $X(m)$ according  to  the  fit  law $\tilde{\mu}.$ It connects  to node $v_m,$ independently with probability proportional to 

$$\tilde{f}(N^{(m)}(v_m),A(m))=
\frac{1}{f(N^{(m)}(v_m),A(m))}e^{\tilde{g}(N^{(m)}(v_m),A(m))}.$$

\end{itemize}

We denote by $\P_{\tilde{f},n}$ the law of the new fitness P.A graph
and observe that it is absolute continuous with respect to
$\P_{{f},n},$ as for fitness graph $X$ we have that

\begin{align}
\sfrac{d\P_{\tilde{f},n}}{d\P_{{f},n}}(X)&=\prod_{m=1}^{n}\sfrac{\tilde{\mu}(X(m)}{\mu(X(m)}\times\sfrac{\prod_{m=1}^{n-1}\tilde{f}(N^{(m)}(j_m),\,X(m))}{\prod_{m=2}^{n-1}\sum_{i=1}^{m-1}\tilde{f}(N^{(m)}(i),\,X(m))}
\times\sfrac{\prod_{m=2}^{n-1}\sum_{i=1}^{m-1}{f}(N^{(m)}(i),\,X(m))}{\prod_{m=1}^{n-1}{f}(N^{(m)}(j_m),\,X(m))}\\
&=e^{(n-1)\Big\langle
\tilde{h}-U(\tilde{h}),\,M_X\Big\rangle+(n-1)\Big\langle\tilde{g}-2\log
f,\,M_{X}\Big\rangle-(n-1)\Big\langle U_{\tilde{g}}\otimes
L,\,M_{X}\otimes id \Big\rangle},\label{EqPA4}
\end{align}\\

where  $id$ is the  identity function from $[0,1]$  to $[0,1].$ The
following Lemma will be used to establish the upper bound in a
variational formulation.

\begin{lemma}\label{PA5}
For every $\theta>0$ there exits a
 compact set $K_{\theta}\subset\skrim(\skrix^*)$ such that
\begin{equation}\label{EqPA6}
\limsup_{n\to\infty}\sfrac{1}{n}\log\prob_{f,n}\Big\{ M_X\not\in K\,
\big |(L_{[nt]/n}=\nu_{[nt]/n,\,\forall t\in(0,1]})\Big\} \leq -
\theta.
\end{equation}
\end{lemma}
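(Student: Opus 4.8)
\emph{Proposed proof.} The plan is to take $K_\theta$ to be a sublevel set of a degree-tail functional, bound the conditional probability of leaving it by a conditional exponential Chebyshev inequality, and estimate the resulting exponential moment by multiplying the one-step tilts that already appear in the change-of-measure identity~\eqref{EqPA4}. Fix an increasing $\varphi\colon\skrin\to[0,\infty)$ with $\varphi(k)\uparrow\infty$, chosen to grow slowly enough that $\sum_k\varphi(k)\,\pi_f(k\,|a)<\infty$ for every $a\in\skrix^*$ (possible because, by~\eqref{Rudas} and the linearity of $f$, $\pi_f(\cdot\,|a)$ has a power-law tail). Put $\Phi(\omega):=\sum_{(k,a)\in\skrin\times\skrix^*}\varphi(k)\,\omega(k,a)$ and, for a constant $R>0$ to be fixed last,
$$K_\theta:=\cl\big\{\omega\in\skrim(\skrin\times\skrix^*):\,\Phi(\omega)\le R\big\}.$$
On this set $\sum_{k\ge N}\omega(k,a)\le R\big/\inf_{j\ge N}\varphi(j)\to0$ uniformly, so it is bounded and uniformly summable in $\ell^1(\skrin\times\skrix^*)$, hence relatively compact for the total-variation topology; thus $K_\theta$ is compact, and it suffices to bound $\prob_{f,n}\{\Phi(M_X)>R\mid\mathcal C_n\}$, where $\mathcal C_n:=\{L_{[nt]/n}=\nu_{[nt]/n}\ \forall t\in(0,1]\}$.

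\emph{The conditional exponential moment.} For any $\lambda>0$, a conditional exponential Chebyshev inequality gives
$$\prob_{f,n}\{\Phi(M_X)>R\mid\mathcal C_n\}\le e^{-(n-1)\lambda R}\,\me_{f,n}\!\big[e^{(n-1)\lambda\Phi(M_X)}\bigm|\mathcal C_n\big],$$
and since $(n-1)\Phi(M_X)=\sum_{m=2}^{n}\varphi\big(N^{(m)}(j_m)\big)$, the conditional exponential moment factorises along the attachment times by iterated conditioning. The decisive point is that $\mathcal C_n$ freezes the colour sequence and the colour-resolved empirical degree profile at every step, so that, conditionally on $\mathcal C_n$ and on the network up to time $m-1$, vertex $m$ picks a parent of in-degree $k$ with probability proportional to $f(k,\cdot)$ times the prescribed number of degree-$k$ vertices of the matching colour; hence, with $\tilde g:=2\log f+\lambda\varphi$ as in the change of measure of Section~4.2, the one-step factor is deterministic on $\mathcal C_n$ and satisfies
$$\me_{f,n}\!\big[e^{\lambda\varphi(N^{(m)}(j_m))}\bigm|\mathcal F_{m-1},\mathcal C_n\big]\le\exp\Big(\sup_{a\in\skrix^*,\,t\in(0,1]}U_{\tilde g}^{(n)}\otimes\nu(a,t)\Big)=:e^{c_n(\lambda)}.$$
By~\eqref{equ.one}, $\gamma>0$, and the fact that $\sum_i N^{(m)}(i)$ is of order $m$, the denominator $\langle f(\cdot,a),\nu_t(\cdot|a)\rangle$ stays bounded away from $0$ and $\infty$ along the path; and since the limiting profile inherits the power-law tail of $\pi_f(\cdot\,|a)$ from~\eqref{Rudas} while $\varphi$ is subpolynomial, the numerator $\langle f(\cdot,a)\,e^{\lambda\varphi},\nu_t(\cdot|a)\rangle$ is finite uniformly in $n$ and decreases to the denominator as $\lambda\downarrow0$; thus $c(\lambda):=\sup_n c_n(\lambda)<\infty$ and $c(\lambda)\downarrow0$. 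Multiplying the $n-1$ one-step bounds gives $\me_{f,n}[e^{(n-1)\lambda\Phi(M_X)}\mid\mathcal C_n]\le e^{(n-1)c(\lambda)}$; equivalently, $(n-1)c_n(\lambda)$ dominates the contribution of the term $(n-1)\langle U_{\tilde g}\otimes L,M_X\otimes id\rangle$ of~\eqref{EqPA4} on $\mathcal C_n$ with $\tilde h=0$.

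\emph{Conclusion and the main obstacle.} Combining the two displays, $\limsup_{n\to\infty}\tfrac1n\log\prob_{f,n}\{M_X\notin K_\theta\mid\mathcal C_n\}\le c(\lambda)-\lambda R$; I would then pick $\lambda>0$ small and $R>(\theta+c(\lambda))/\lambda$, which makes the right-hand side $\le-\theta$, proving the lemma. The step I expect to be the real work is the uniform-in-$n$ bound $\sup_n c_n(\lambda)<\infty$: in a linear preferential-attachment network the maximal in-degree at time $m$ grows like a power of $m$, so a crude estimate such as $e^{\lambda\varphi(N^{(m)}(j_m))}\le e^{\lambda\varphi(m)}$ is useless, and one must instead play the two competing linear effects against each other — a vertex of degree $k$ is selected with probability $\asymp k/m$, while there are only $\lesssim m/k$ vertices of degree $\ge k$ — and match the growth of $\varphi$ to the tail exponent $c/\gamma(a)$ of $\pi_f(\cdot\,|a)$ so that $\langle f(\cdot,a)\,e^{\lambda\varphi},\nu_t(\cdot|a)\rangle$ is finite uniformly in $n$ and along the path. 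This is precisely where~\eqref{equ.one} and the weak-preference condition~\eqref{persistentEqPA} (equivalently $\gamma>0$ for linear $f$) are used; everything else is routine bookkeeping with the conditional exponential Chebyshev inequality and~\eqref{EqPA4}.
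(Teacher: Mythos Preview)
Your strategy --- exponential Chebyshev, factorise the conditional moment along attachment steps, and build the compact set from a tail functional --- is exactly the route the paper takes. The substantive difference is in the choice of tilt. You use a single unbounded function $\lambda\varphi$ and one sublevel set $\{\Phi\le R\}$; the paper instead uses, for each integer $l$, the \emph{bounded} tilt $l^{2}\1_{\{k>k(l,\delta)\}}$, obtains
\[
\P_{f,n}\big\{M_X(N>k(l,\delta))\ge l^{-1}\,\big|\,\mathcal C_n\big\}\le e^{-n(l-\delta-\log 2)},
\]
and then intersects the sets $\{\nu:\nu(N>k(l,\delta))<l^{-1}\}$ over $l\ge M$ to produce a Prokhorov-type compact set.

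The payoff of the paper's choice is that it sidesteps precisely the obstacle you single out. With a bounded tilt the one-step factor is automatically finite; the only thing left to control is the size-biased tail $\sum_{i:N^{(m)}(i)>k}f(N^{(m)}(i),a)\big/\big(c(m-1)\big)$, and since $\sum_i f(N^{(m)}(i),a)=c(m-1)$ identically by~\eqref{equ.one}, this is just the tail of a probability measure and is made small by choosing the cutoff $k(l,\delta)$ large --- no matching of $\varphi$ to the tail exponent of $\pi_f$, and no assumption on the decay rate of $\nu_t$, is required. Your route is workable, but the step you correctly call ``the real work'' --- the uniform bound $\sup_{n}\sup_{a,t}\langle f\,e^{\lambda\varphi},\,\nu_t(\cdot\,|a)\rangle<\infty$ along the conditioned path --- is genuine extra labour, and your justification that $\nu_t$ ``inherits the power-law tail of $\pi_f$'' is only automatic for the typical path, not for an arbitrary one. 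The bounded-tilt argument in the paper avoids this issue entirely, at the modest cost of a nested-set construction rather than a single sublevel set.
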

\begin{proof}
Let  $1\ge\delta>0,$  and  $l\in N.$  We  choose  $k(l,\delta)\in
\N$ large  enough such  that,  for  large $n,$  we have
                        $$\sum_{i=1}^{[nt]-1}e^{l^2\1_{\{N^{([nt])}(i)>k(l,\delta)\}}}\sfrac{f(N^{([nt])}(i),a)}{
                        c([nt]-1)}\le 2e^{\delta},\,\mbox{ for all $a\in \skrix$ and  for all  $t.$}$$
Now using  Chebyschev's  inequality we  have
$$
\begin{aligned}
\P_{f,n}\Big\{M_{X}(N^{([nt])}&>k(l,\delta))\ge l^{-1},\,
L_{[nt]/n}=\nu_{[nt]/n,\,\forall t\in(0,1]}\Big\}\\
&\le
e^{-nl}\me\Big\{e^{\sum_{m=1}^{n-1}l^2\1_{\{N^{(m)}(j_m)>k(l,\delta)\}}},\,L_{\sfrac{m}{n}}=\nu_{\sfrac{[m]}{n}},\, m=2,3,4,...,n-1\Big\}\\
&=
e^{-nl}\prod_{m=2}^{n}\me\Big\{e^{l^2\1_{\{N^{(m)}(j_m)>k(l,\delta)\}}},\,L_{\sfrac{m}{n}}=\nu_{\sfrac{[m]}{n}}\,\Big\}\\
&\le e^{-nl}\Big [\sup_{a\in\skrix} \sup_{t\ge
0}\Big(\sum_{i=1}^{[nt]-1}e^{l^2}\1_{\{N^{([nt])}(i)>k(l,\delta)\}}\sfrac{f(N^{([nt])}(i),a)}{([nt]-1)
                        \Big\langle f,\,\nu_{\sfrac{[nt]}{n}}(\cdot|a)\Big\rangle}\Big)\Big]^{n}\\
&= e^{-nl}\Big [\sup_{a\in\skrix} \sup_{t\ge
0}(\sum_{i=1}^{[nt]-1}e^{l^2}\1_{\{N^{([nt])}(i)>k(l,\delta)\}}\sfrac{f(N^{([nt])}(i),a)}{
                        c([nt]-1)})\Big]^{n}\\
 &\le  e^{-nl}\times(2e^{\delta})^{n}\\
 &=e^{n(l-\delta-\log2)}
\end{aligned}
$$

Now given  $\theta$  we choose  $M>\theta+\delta+\log2$  and  define
the set $$\Gamma_{\delta, \theta}:=\big\{\nu:\nu(N>k(l,\delta))<
l^{-1}, l\ge M\big\}$$

As  $\big\{N\le k(l,\delta)\big\}$  is  pre-compact,
$\Gamma_{\delta}$   is  compact  in  the weak topology by  prokohov
criterion. Moreover

$$\P_{f,n}\Big\{M_{X}\not\in K_{\theta}\,\big| (L_{[nt]/n}=\nu_{[nt]/n,\,\forall t\in(0,1]})\Big\}\le
\sfrac{1}{1-e^{-1}}\sfrac{e^{-\theta}}{\P\Big\{L_{[nt]/n}=\nu_{[nt]/n,\,\forall
t\in(0,1]}\Big\}}=\sfrac{1}{1-e^{-1}}e^{-\theta}.$$

Now  letting  $K_{\theta}$   be  the  closure of
$\cap_{1\ge\delta>0}\Gamma_{\delta,\theta}$ and  taking  limit as
$n$ approaches $\infty$  we  have \eqref{EqPA6}  which  ends the
proof the  Lemma.

\end{proof}

\emph{4.3 Proof  of  Theorem~\ref{PA3}}

We derive the upper bound in a variational formulation. To do this,
we  denote by $\skric_1$ the space  of  all  functions  on $\skrix$
and  by $\skric_2$  the space of  all  bounded continuous functions
on $\skrin\times\skrix^*.$

We define on  the space of probability measures
$\skrim(\skrin\times\skrix)$ the function $\hat{K}$ given by

\begin{equation}\begin{aligned}\label{ratePA7}
\hat{K}_{\nu}(\omega)=\sup_{\tilde{g}\in
\skric_2,\tilde{h}\in\skric_1}\Big\{\int
(\tilde{h}-U(\tilde{h}))\omega_{2,1}(da_1)+\int
\tilde{g}(k,a)\omega(dk,da)&-2\int\log f(k,a)\omega(dk,da)\\
&-\int U_{\tilde{g}}\otimes{\nu}(a,t)\omega_2(da)\otimes dt\Big\}.
\end{aligned}
\end{equation}

\begin{lemma}\label{PA8}
For every close set $ F\subset\skrim(\skrin\times\skrix)$ we have
\begin{equation}\label{EqPA9}
\limsup_{n\to\infty}\sfrac{1}{n}\log\prob_{f,n}\Big\{ M_X\in F \Big
|(L_{[nt]/n}=\nu_{[nt]/n,\,\forall t\in(0,1]})\Big\} \leq
-\inf_{\omega\in F }\hat{K}_{\nu}(\omega)
\end{equation}
\end{lemma}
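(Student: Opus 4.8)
The plan is to obtain \eqref{EqPA9} as a standard exponential-Chebyshev (Laplace–Varadhan) upper bound built on the change-of-measure identity \eqref{EqPA4}, combined with the exponential tightness supplied by Lemma \ref{PA5}. First I would fix a closed set $F\subset\skrim(\skrin\times\skrix)$ and, given $\theta>0$, invoke Lemma \ref{PA5} to get the compact set $K_\theta$; since $F\cap K_\theta$ is compact it suffices, after the usual splitting $\prob_{f,n}\{M_X\in F\mid\cdots\}\le\prob_{f,n}\{M_X\in F\cap K_\theta\mid\cdots\}+\prob_{f,n}\{M_X\notin K_\theta\mid\cdots\}$ and letting $\theta\to\infty$ at the end, to prove the bound with $F$ replaced by the compact set $F\cap K_\theta$. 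On a compact set the passage from a local (pointwise) bound to a uniform one is the routine finite-subcover argument, so the real content is the local estimate.

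For the local estimate I would fix $\omega\in\skrim(\skrin\times\skrix)$ and a neighbourhood $B$ of $\omega$, pick test functions $\tilde g\in\skric_2$ and $\tilde h\in\skric_1$, and write the probability under $\prob_{f,n}$ as an expectation under the tilted law $\prob_{\tilde f,n}$ of the Radon–Nikodym derivative $d\prob_{f,n}/d\prob_{\tilde f,n}$, i.e. the reciprocal of \eqref{EqPA4}. On the event $\{M_X\in B\}$, and using that on the conditioning event $L_{[nt]/n}=\nu_{[nt]/n}$ so that $U_{\tilde g}^{(n)}\otimes L$ is deterministic and converges to $U_{\tilde g}\otimes\nu$, the exponent in \eqref{EqPA4} is, up to an error that is $o(n)$ uniformly and an error that tends to $0$ as $B$ shrinks to $\omega$, equal to $(n-1)$ times
$$
\int(\tilde h-U(\tilde h))\,\omega_{2,1}(da_1)+\int\tilde g\,d\omega-2\int\log f\,d\omega-\int U_{\tilde g}\otimes\nu(a,t)\,\omega_2(da)\otimes dt .
$$
Bounding $\prob_{\tilde f,n}(\cdot)\le 1$ then gives
$$
\limsup_{n\to\infty}\tfrac1n\log\prob_{f,n}\{M_X\in B\mid\cdots\}\le-\Big[\text{the bracketed expression above}\Big]+o_B(1),
$$
and taking the supremum over $\tilde g,\tilde h$ and then the infimum over a shrinking sequence of neighbourhoods $B$ yields the local bound $\le-\hat K_\nu(\omega)+o_B(1)$, hence $\le-\hat K_\nu(\omega)$ in the limit. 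Covering the compact set $F\cap K_\theta$ by finitely many such neighbourhoods, each chosen so that the local bound is within $\eps$ of $-\inf_{\omega\in F}\hat K_\nu(\omega)$, and using $\log(\sum_i a_i)\le\log(\#\{i\})+\max_i\log a_i$, completes the argument; finally letting $\theta\to\infty$ removes the contribution of $M_X\notin K_\theta$.

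The main obstacle is the careful justification that the stochastic exponent in \eqref{EqPA4} can be replaced, on the event $\{M_X\in B\}$ intersected with the conditioning event, by the deterministic expression above with only an $o(n)$ error: one must control the terms $\langle\tilde h-U(\tilde h),M_X\rangle$ and $\langle\tilde g-2\log f,M_X\rangle$ via continuity of the pairings in $M_X$ near $\omega$ (here boundedness and continuity of $\tilde g\in\skric_2$ is used, and the $\log f$ term needs the compactness/truncation from Lemma \ref{PA5} since $\log f$ is unbounded), and one must verify that $U_{\tilde g}^{(n)}\otimes L\to U_{\tilde g}\otimes\nu$ uniformly along the conditioned paths so that $\langle U_{\tilde g}\otimes L,M_X\otimes\mathrm{id}\rangle$ converges to the integral against $\omega_2\otimes dt$. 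Once these approximations are in place, the rest is the boilerplate of the abstract Laplace upper bound.
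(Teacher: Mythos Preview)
Your proposal is correct and follows essentially the same route as the paper: exponential change of measure via \eqref{EqPA4}, a Chebyshev/Laplace bound yielding a local estimate on small neighbourhoods $B_\omega$, exponential tightness from Lemma~\ref{PA5} to reduce to a compact set, and a finite subcover to pass from local to global. The paper organizes this slightly differently (it first records the moment bound \eqref{EqPA10} and works with the truncated rate $\hat K_{\nu,\eps}=\min\{\hat K_\nu,\eps^{-1}\}-\eps$), and you are in fact more explicit than the paper about the technical points---the unboundedness of $\log f$ and the convergence $U_{\tilde g}^{(n)}\otimes L\to U_{\tilde g}\otimes\nu$---which the paper handles only implicitly.
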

\begin{Proof}
 We let $\tilde{h}\in\skric_1$,  $\tilde{g}\in\skric_2$ and use  the Jensen's inequality to obtain
$$\begin{aligned}
&e^{(\sup_{a_1}\tilde{h}(a)-\inf_{a_1}\tilde{h}(a_1))}\le\int
e^{\tilde{h}(X(n))-U(\tilde{h})}d\tilde{\P}_{f,n}\\
&=\me\Big\{e^{(n-1)\Big[\Big\langle
\tilde{h}-U(\tilde{h}),\,M_X\Big\rangle+\Big\langle\tilde{g}-2\log
f,\,M_{X}\Big\rangle-\Big\langle U_{\tilde{g}}\otimes
L,\,M_{X}\otimes id
\Big\rangle\Big]},(L_{[nt]/n}=\nu_{[nt]/n,\,\forall
t\in(0,1]})\Big\}.
\end{aligned}$$
This yields the inequality
\begin{equation}\label{EqPA10}
\limsup_{n\to\infty}\sfrac{1}{n}\log\me\Big\{e^{(n-1)\Big[\Big\langle
\tilde{h}-U(\tilde{h}),\,M_X\Big\rangle+\Big\langle\tilde{g}-2\log
f,\,M_{X}\Big\rangle-\Big\langle U_{\tilde{g}}\otimes
L,\,M_{X}\otimes id \Big\rangle\Big]}\Big
|(L_{[nt]/n}=\nu_{[nt]/n,\,\forall t\in(0,1]})\Big\}=0.
\end{equation}
Given $\eps>0,$ define $\hat{K}_{\eps,\nu}$ by
$\hat{K}_{\nu,\eps}(\omega)=\min\big\{\hat{K}_{\nu}(\omega),{\eps}^{-1}\big\}-\eps.$
For $\omega\in F$ we fix $\tilde{h}\in\skric_1$ and
$\tilde{g}\in\skric_2$ such that
$$\langle
\tilde{h}-U(\tilde{h}),\,\omega_{2,1}\rangle+\langle\tilde{g}-2\log
f,\,\omega \rangle-\langle
 U_{\tilde{g}}^{\nu},\,\omega\otimes id\rangle\ge\hat{K}_{\nu,\eps}(\omega).$$

Now, because the function $\tilde{g}$ is bounded, we can find open
neighbourhood $B_{\omega}$  of $\omega$, such that
\begin{equation}\label{EqPA11}
\inf_{\tilde{\omega}\in B_{\omega}}\Big\{ \langle
\tilde{h}-U(\tilde{h}),\,\omega_{2,1}\rangle+\langle\tilde{g}-2\log
f,\,\omega \rangle-\langle
 U_{\tilde{g}}^{\nu},\,\omega\otimes id\rangle\,\Big\}
\ge\hat{K}_{\nu,\eps}(\omega)-\eps.
\end{equation}
Take  $\delta=\eps,$ apply the Chebyshev's inequality to
\eqref{EqPA11} and use \eqref{EqPA10}  to  get
\begin{equation}\label{EqPA12}
\begin{aligned}
&\limsup_{n\to\infty}\sfrac{1}{n}\log\prob_{f,n}\Big\{M_X\in B_{\omega}\Big |(L_{[nt]/n}=\nu_{[nt]/n,\,\forall t\in(0,1]})\Big\}\\
&\leq\limsup\sfrac{1}{n}\log\me\Big\{e^{(n-1)\Big[\Big\langle
\tilde{h}-U(\tilde{h}),\,M_X\Big\rangle+\Big\langle\tilde{g}-2\log
f,\,M_{X}\Big\rangle-\Big\langle U_{\tilde{g}}\otimes
L,\,M_{X}\otimes id \Big\rangle\Big]}\Big |(L_{[nt]/n}=\nu_{[nt]/n,\,\forall t\in(0,1]})\Big\}\\
&\qquad\qquad-\hat{K}_{\nu,\eps}(\omega)+\eps\\
&\leq-\hat{K}_{\nu,\eps}(\omega)+2\eps
\end{aligned}
\end{equation}
Using Lemma~\ref{PA5} with $\theta=\eps^{-1} $ we may choose the
compact set $G_\eps$ such that
$$
\limsup_{n\to\infty}\sfrac{1}{n}\log\prob_{f,n}\Big\{ M_{X}\not\in
G_\eps\Big |(L_{[nt]/n}=\nu_{[nt]/n},\,\forall t\in(0,1])\Big\} \leq
- \eps^{-1}. $$ Now, the set $F\cap G_{\eps}$ is compact and
therefore we may be covered by finitely many sets
$B_{\omega_1},\,\ldots,\,B_{\omega_r}$, with $\omega_i\in F$ , for
$i=1,\,\ldots,\,r.$ Hence, we have that

$$\begin{aligned}\prob_{f,n}\Big\{
M_X\in F \Big |L=(L_{[nt]/n}=\nu_{[nt]/n},\,\forall
t\in(0,1])\Big\}&\leq \sum_{i=1}^{r}\prob\Big\{ M_X\in
B_{\omega_i}\Big |(L_{[nt]/n}=\nu_{[nt]/n,\,\forall
t\in(0,1]})\Big\}\\
&\qquad\qquad+\prob\Big\{ M_X\not\in G_\eps \Big
|(L_{[nt]/n}=\nu_{[nt]/n,\,\forall t\in(0,1]})\Big\}.\end{aligned}
$$

Next we use \eqref{EqPA12} to obtain for small enough $\eps>0,$
$$\begin{aligned}
\limsup_{n\to\infty}&\sfrac{1}{n}\log\prob_{f,n}\Big\{ M_X\in F\Big
|(L_{[nt]/n}=\nu_{[nt]/n,\,\forall
t\in(0,1]})\Big\}\\
&\leq
\max_{i=1}^{r}\limsup_{n\to\infty}\sfrac{1}{n}\log\prob_{f,n}\Big\{
M_X\in B_{\omega_i}\Big |(L_{[nt]/n}=\nu_{[nt]/n,\,\forall
t\in(0,1]})\Big\}-\eps^{-1}\le-\hat{K}_{\nu,\eps}(\omega)+2\eps
\end{aligned}$$

Taking $\eps\downarrow 0$ we get the desire statement.
\end{Proof}

We show that the function $\hat{K}_{\nu}(\omega)$ in Lemma~\ref{PA8}
may be replaced by the good rate function
$$K_{\nu}(\omega)=H\Big(\omega_{2,1}\,\|\,\mu\Big)
+\sum_{a\in\skrix}\omega_{2}(a)H\Big(\omega(\cdot|a)\,\|\,\frac{c}{f(\cdot,\,a)}\otimes\int_{0}^{1}\nu_t(\cdot|a)dt\Big).$$

\begin{lemma}\label{PA13}
 For every
 $\nu\in\skrid_{\skrim}$ we have
 that $ \hat{K}_{\nu}(\omega)\ge K_{\nu}(\omega).$ Moveover, the
  function $K_{\nu}$ is good rate  function and lower semi-continuous on
  $\skrim(\skrin\times\skrix).$
 \end{lemma}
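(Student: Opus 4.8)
The plan is to exhibit near-optimal test functions $(\tilde h,\tilde g)$ in the variational formula \eqref{ratePA7} and to recognise the two resulting suprema as the two relative-entropy terms of $K_\nu(\omega)$; lower semicontinuity and goodness will then follow from the standard properties of relative entropy together with finiteness of $\skrix$. To begin, observe that the functional inside the supremum in \eqref{ratePA7} splits additively: the piece $\int(\tilde h-U(\tilde h))\,\omega_{2,1}(da_1)$ involves only $\tilde h\in\skric_1$, and the remaining three pieces involve only $\tilde g\in\skric_2$. Since $U(\tilde h)=\log\langle e^{\tilde h},\mu\rangle$ is constant and $\omega_{2,1}$ is a probability measure, the $\tilde h$-supremum is $\sup_{\tilde h\in\skric_1}\{\langle\tilde h,\omega_{2,1}\rangle-U(\tilde h)\}$, the Gibbs (Donsker--Varadhan) variational expression for $H(\omega_{2,1}\|\mu)$; because $\skrix$ is finite and $\mu>0$ on $\skrix$ we have $\omega_{2,1}\ll\mu$ always, so inserting $\tilde h=\log(d\omega_{2,1}/d\mu)\in\skric_1$ attains this value (and the Gibbs inequality shows it is not exceeded). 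This yields the first term of $K_\nu$.

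For the $\tilde g$-supremum, write $q_a:=\tfrac{c}{f(\cdot,a)}\otimes\int_0^1\nu_t(\cdot|a)\,dt$ for the reference measure appearing in $K_\nu$, split the $\omega$- and $\omega_2\otimes dt$-integrals over the colour $a\in\skrix$ (the supremum then distributes over $a$ since the pieces of $\tilde g$ attached to different colours are free), and work with test functions $\tilde g(\cdot,a)$ chosen of the special form for which the factor $f$ and the normalisation $\langle f,\nu_t(\cdot|a)\rangle$ inside $U_{\tilde g}\otimes\nu(a,t)$ combine into the reciprocal weight $c/f(\cdot,a)$. Then apply Jensen's inequality to the concave map $\log$ to replace $\int_0^1\log(\cdots)\,dt$ by $\log\int_0^1(\cdots)\,dt$, so that the time-averaged log-normalisation becomes the logarithm of a single integral against $\int_0^1\nu_t(\cdot|a)\,dt$. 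The inner supremum over the (bounded) test functions $\tilde g(\cdot,a)$ is then exactly the Donsker--Varadhan representation of $H(\omega(\cdot|a)\|q_a)$, and averaging against $\omega_2$ recovers the second term of $K_\nu$; hence $\hat K_\nu(\omega)\ge K_\nu(\omega)$. When $K_\nu(\omega)=\infty$ (absolute continuity fails in one of the two entropies) one lets the test functions diverge on the offending null set, which is legitimate because $\skric_1$ consists of all functions on the finite set $\skrix$ and bounded functions on the countable set $\skrin$ already detect non-absolute-continuity.

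For the properties of $K_\nu$: the maps $\omega\mapsto\omega_{2,1}$ and $\omega\mapsto(\omega(\cdot,a),\omega_2(a))$ are continuous on $\skrim(\skrin\times\skrix)$, relative entropy between finite measures is jointly convex and jointly lower semicontinuous, and the identity $\omega_2(a)H(\omega(\cdot|a)\|q_a)=H(\omega(\cdot,a)\|\omega_2(a)q_a)$ exhibits each summand of $K_\nu$ as an entropy of the sub-probability measure $\omega(\cdot,a)$; as $\skrix$ is finite, $K_\nu$ (a finite sum of such summands) is convex and lower semicontinuous. For compact sublevel sets, on $\{K_\nu\le\ell\}$ one has $\omega_2(a)H(\omega(\cdot|a)\|q_a)\le\ell$ for every $a$, and the standard estimate $\omega(B|a)\le(\log2+H(\omega(\cdot|a)\|q_a))/\log(1/q_a(B))$ applied with $B=\{k>K\}$ forces $\omega(\{k>K\})\to0$ uniformly on the sublevel set as $K\to\infty$; thus the sublevel set is tight, and being closed it is compact (Prokhorov's theorem, noting that on the countable set $\skrin\times\skrix$ weak convergence of probability measures entails total-variation convergence by Scheffé's lemma). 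Hence $K_\nu$ is a good rate function.

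The main obstacle is the choice of the special form of $\tilde g$ in the second paragraph: one must arrange the test functions so that, after Jensen, the effective reference measure is precisely $q_a=\tfrac{c}{f(\cdot,a)}\otimes\int_0^1\nu_t(\cdot|a)\,dt$ rather than some $f$-tilted time-average of $\nu_t(\cdot|a)$. This is where the linear form $f(k,a)=\gamma(a)k+\beta(a)$, the normalisation $\gamma(a)+\beta(a)=c$ of \eqref{equ.one} and the standing assumption $\nu_t(k|a)\le\nu_t^\nu(k|a)$ are used, and where one also checks that $\sum_k\tfrac{c}{f(k,a)}\int_0^1\nu_t(k|a)\,dt<\infty$ so that $H(\cdot\|q_a)$ and the tightness estimate above are meaningful. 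Everything else is routine manipulation of the Donsker--Varadhan formula and standard lower-semicontinuity and tightness facts for relative entropy.
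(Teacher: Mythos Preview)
Your proposal is correct and follows essentially the same route as the paper: split the variational formula \eqref{ratePA7} additively, identify the $\tilde h$-supremum as $H(\omega_{2,1}\|\mu)$ via Donsker--Varadhan, and for the $\tilde g$-part use Jensen's inequality on the concave $\log$ together with the normalisation $\langle f,\nu_t(\cdot|a)\rangle=c$ from \eqref{equ.one} to reduce to the Donsker--Varadhan expression for the conditional entropies against $q_a=\tfrac{c}{f(\cdot,a)}\otimes\int_0^1\nu_t(\cdot|a)\,dt$. Your treatment of goodness via an explicit entropy-based tightness estimate is in fact more careful than the paper's, which simply asserts that the sublevel sets lie in a ``bounded'' set and are therefore compact.
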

 \begin{Proof}Suppose $\nu_1=\omega$.Then, using the Jensen's inequality,
 by  our assumption  \eqref{equ.one} and  the variational characterization of entropy we  have
 $$\begin{aligned}
H\Big(\omega_{2,1}\,\|\,\mu\Big)
=\sup_{\tilde{h}}\Big\{\int\tilde{h}(a_1)\omega_{2,1}(da_1)-\log\int
e^{\tilde{h}(a_1)}\mu(da_{1})\Big\}
\end{aligned}$$

$$\begin{aligned}
&\sum_{a\in\skrix}\omega_{2}(a)H\Big(\omega(\cdot|a)\,\|\,\frac{c}{f(\cdot,,a)}\otimes\int_{0}^{1}\nu_t(\cdot|a)dt\Big)\\
&=\sup_{\tilde{g}}\Big\{\int
\tilde{g}(k,a)\omega(dk,da)-\log\int\int c
\sfrac{e^{\tilde{g}(k,\,a)}}{f(k,\,a)}\omega_2(da)\int\nu_{t}(dk|a)dt
\, \Big\}\\
&\le\sup_{\tilde{g}}\Big\{\int \tilde{g}(k,a)\omega(dk,da)-2\log
c-\int\int\log\Big(\int
\sfrac{e^{\tilde{g}(k,\,a)}}{cf(k,\,a)}\nu_{t}(dk|a)\Big)\omega_2(da)dt
\, \Big\}\\
&=\sup_{\tilde{g}}\Big\{\int \tilde{g}(k,a)\omega(dk,da)-2\log \int
f(k,a)\omega(dk,da)-\int\int\log\Big(\int
\sfrac{e^{\tilde{g}(k,\,a)}}{cf(k,\,a)}\nu_{t}(dk|a)\Big)\omega_2(da)dt
\, \Big\}\\
&\le\sup_{\tilde{g}}\Big\{\int \tilde{g}(k,a)\omega(dk,da)-2\int\log
f(k,a)\omega(dk,da)-\int\int\log\Big(\sfrac{\langle
\sfrac{e^{\tilde{g}}}{f},\,\nu_{t}(\cdot|a)\rangle}{\langle
f,\,\nu_{t}(\cdot|a)\rangle}\Big)\omega_2(da)dt
\, \Big\}\\
&=\sup_{\tilde{g}}\Big\{\int \tilde{g}(k,a)\omega(dk,da)-2\int\log
f(k,a)\omega(dk,da)-\int U_{\tilde{g}}^{\nu}(a,t)\omega_2(da)\otimes
dt\Big\}\\
&=\hat{K}_{\nu}(\omega)
\end{aligned}$$

Recall  the definition of $K_{\nu}$   above and  notice, mapping
$\omega\to K_{\nu}(\omega)$ is  continuous function. Moreover, for
all $\alpha<\infty$, the level sets $\{K_{\nu}\le \alpha\}$ are
contained in the bounded set
$$\Big\{\omega\in\skrim(\skrin\times\skrix)\colon
\,\sum_{a\in\skrix}\omega_{2}(a)H\Big(\omega(\cdot|a)\,\|\,\frac{c}{f(\cdot,\,a)}
\otimes\int_{0}^{1}\nu_t(\cdot|a)dt\Big)\le\alpha\Big\}$$ and are
therefore compact. Consequently, $K_{\nu}$ is a good rate function.

\end{Proof}

 \emph{4.4 Proof  of  Theorem \ref{PA3} By  Mixing }

 To  use  the
technique of  mixing  LDP  results developed in (Biggins, 2004), we
check  the  main criteria needed for the validity of (Biggins, 2004,
Theorem~5(a))  in  the  following   Lemma. We write
$\Theta_n:=\skrid_{\skrim_{n}(\skrin\times\skrix)},$
$\Theta:=\skrid_{\skrim(\skrin\times\skrix)},$  and  define
 $$P_{f,n}(\nu_1):=\P\Big[M_X=\nu_1\,\big|\,L_{\sfrac{[nt]}{n}}^X(\cdot,a)=\nu_{\sfrac{[nt]}{n}}(\cdot,a),\,
 t\in[0,1) \mbox{ and } a\in\skrix \Big]$$
 $$P_{n}\Big(\nu_{\sfrac{[nt]}{n}},\,t\in[0,1)\Big):=\P\Big\{L_{\sfrac{[nt]}{n}}^X=\nu_{\sfrac{[nt]}{n}}\, \Big\}$$

Then,  the  joint  distribution  of $M_X$  and  $L^X$ is obtained by
the  mixture  of $P_{f,n}$ and $P_{n}$ as follows:
$$ d\tilde{P}_{f,n}(\nu,\,\nu_1):=
dP_{n}(\nu)dP_{f,n}(\nu_1).$$

\begin{lemma}\label{Mix}\,\,
  The family  of distributions  (i) $(P_{f,n},\,n\in\N)$ (ii)
$(\tilde{P}_{f,n},\,n\in\N)$ are exponentially  tight.

\end{lemma}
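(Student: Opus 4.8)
\emph{Proof proposal.} The plan is to establish exponential tightness for the two families separately and then glue them together via the mixture structure $d\tilde P_{f,n}(\nu,\nu_1)=dP_{n}(\nu)\,dP_{f,n}(\nu_1)$, so that the criterion required for (Biggins, 2004, Theorem~5(a)) is met. Recall that exponential tightness of a family $(Q_n)$ on a space $\mathcal{S}$ means: for every $\alpha>0$ there is a compact $K_\alpha\subset\mathcal{S}$ with $\limsup_{n\to\infty}\frac1n\log Q_n(\mathcal{S}\setminus K_\alpha)\le-\alpha$.

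\emph{Part (i): the conditional laws $(P_{f,n})$.} Here the ambient space is $\skrim(\skrin\times\skrix)$ and $P_{f,n}$ is the law of $M_X$ conditioned on $L^X_{[nt]/n}=\nu_{[nt]/n}$ for $t\in[0,1)$. This is essentially the content of Lemma~\ref{PA5}: given $\alpha>0$, apply that lemma with $\theta=\alpha$ to obtain the compact set $K_\alpha$ (the closure of $\bigcap_{1\ge\delta>0}\Gamma_{\delta,\alpha}$, where $\Gamma_{\delta,\alpha}=\{\nu:\nu(N>k(l,\delta))<l^{-1},\ l\ge M\}$ is $\skrim$-compact by Prokhorov's criterion). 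The point to record is that this $K_\alpha$ is built without any reference to the particular path $\nu_{[nt]/n}$ that is being conditioned upon, so Lemma~\ref{PA5} in fact yields $\limsup_{n}\frac1n\log P_{f,n}\big(K_\alpha^{c}\big)\le-\alpha$ \emph{uniformly} over the conditioning path. Hence $(P_{f,n})$ is exponentially tight.

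\emph{Part (ii): the joint laws $(\tilde P_{f,n})$.} By the mixture formula and a union bound, exponential tightness of $(\tilde P_{f,n})$ on $\Theta\times\skrim(\skrin\times\skrix)$ reduces to exponential tightness of the path marginals $(P_{n})$ on $\Theta=\skrid_{\skrim(\skrin\times\skrix)}$: if $C_\alpha\subset\Theta$ is compact with $\limsup_n\frac1n\log P_n(C_\alpha^{c})\le-\alpha$ and $K_\alpha$ is the path-uniform compact set from Part~(i), then $C_\alpha\times K_\alpha$ is compact and $\tilde P_{f,n}\big((C_\alpha\times K_\alpha)^{c}\big)\le P_n(C_\alpha^{c})+\sup_{\nu}P_{f,n}(\nu;K_\alpha^{c})$, whence $\limsup_n\frac1n\log\tilde P_{f,n}\big((C_\alpha\times K_\alpha)^{c}\big)\le-\alpha$. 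To construct $C_\alpha$ I use the characterization of relatively compact subsets of $\skrid_{\skrim}$ in the uniform norm: a family of cadlag paths is relatively compact provided (a) it admits a uniform modulus of continuity (with jumps tending to $0$), and (b) for each $t$ the set $\{\nu_t\}$ is relatively compact in $\skrim(\skrin\times\skrix)$. For (a) I exploit that one insertion step changes $L^X_{m/n}$ by only $O(1/m)$ in total variation — one new vertex, one new edge, one incremented in-degree, plus the renormalization from $\frac1{m-1}$ to $\frac1m$ — so that $\|L^X_{[nt]/n}-L^X_{[ns]/n}\|\le C(t-s)+C/n$ deterministically on $[s,t]\subset[\eps,1]$, giving a deterministic modulus up to an $o(1)$ term. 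For (b) I rerun the Chebyshev / exponential-change-of-measure estimate from the proof of Lemma~\ref{PA5}, which already controls $\sup_{t}\P_{f,n}\{\nu_t(N>k(l,\delta))\ge l^{-1}\}$ uniformly in $t$, to see that with probability exceeding $1-e^{-n\alpha}$ the entire path stays, at every time, inside the Prokhorov-compact set $\bigcap_{\delta>0}\Gamma_{\delta,\alpha}$. Intersecting the deterministic equicontinuity constraint with this event yields $C_\alpha$ and the desired bound.

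\emph{Main obstacle.} The delicate step is (a): producing a genuinely uniform time-modulus on all of $[0,1]$ — including near $t=0$ — in the non-separable uniform topology on cadlag paths rather than the Skorokhod topology. The deterministic per-step bound sums to $\sum_{m\in[ns,nt]}1/m$, which is uniformly small only once $s$ is bounded away from $0$; the small-$t$ regime must therefore be handled by a separate cruder estimate, e.g.\ noting that by time $\eps$ only $O(n\eps)$ vertices (hence $O(n\eps)$ units of degree mass) are present, so $L^X_t$ is automatically concentrated and equicontinuous there in a weak enough sense. Once this uniform modulus is in hand, everything else is a routine application of Chebyshev's inequality together with Prokhorov's criterion exactly as in Lemma~\ref{PA5}, and the mixture bookkeeping of Part~(ii) is immediate.
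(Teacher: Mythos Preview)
Your overall strategy is sound and the union-bound/product-compact skeleton for (ii) matches the paper exactly, but your route to (i) differs in an instructive way. The paper does \emph{not} go back to Lemma~\ref{PA5} directly; instead it invokes the standard fact (Dembo--Zeitouni, Exercise~4.1.10(c)) that any family obeying a large deviation upper bound with a \emph{good} rate function is automatically exponentially tight, and Lemmas~\ref{PA8}--\ref{PA13} have just established exactly that for $(P_{f,n})$. This is shorter and sidesteps the uniformity-in-the-conditioning-path issue you flag, since the good-rate-function argument needs nothing beyond compactness of level sets. Your direct appeal to Lemma~\ref{PA5} is also correct---the compact set there is indeed built independently of the particular $\nu_{[nt]/n}$---but it duplicates work that the LDP machinery already absorbed.

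For (ii) you go well beyond the paper: the paper simply posits compact sets $K_{\theta_1}\subset\skrim(\skrin\times\skrix)$ and $K_{\theta_2}\subset\skrid_{\skrim(\skrin\times\skrix)}$ for the two marginals (leaning implicitly on the earlier remark that $L^X_{[nt]/n}$ is degenerate at a deterministic path, which makes $P_n$ a Dirac and its tightness trivial), forms the product $\Gamma_\theta$, and runs the union bound $\tilde P_{f,n}(\Gamma_\theta^c)\le\P\{M_X\in K_{\theta_1}^c\}+\P\{L^X\in K_{\theta_2}^c\}$. Your equicontinuity/Prokhorov construction for $C_\alpha$ is a genuinely different, self-contained argument that does not rely on that degeneracy claim; it is more robust but, as you correctly identify, the small-$t$ modulus is the only non-routine point, and your proposed patch (only $O(n\eps)$ mass present before time $\eps$) is adequate. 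In short: both proofs are correct; the paper's is terser by cashing in the LDP upper bound and the degeneracy of $P_n$, while yours is more explicit and would survive in settings where the path is genuinely random.
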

\begin{proof}

(i)  As  this  family distributions  obey  a  large  deviation upper
bound  with  a  good  rate  function  $K_{\nu}(\omega),$ the family
$(P_{f,n},\,n\in\N)$ is exponentially tight. See, e.g. (Dembo and
Zeitouni, 1998, Exercise~4.1.10(c)).

(ii) By  (i)  for  every  $\theta_2$  we  can  find  $K_{\theta_2},$
compact subset of $\skrid_{\skrim(\skrin\times\skrix)}$  such that,
we have
$$\limsup_{n\to \infty}\frac{1}{n}\log
P_{f,n}(K_{\theta_2}^c)\le -\theta_2.$$  Also  by Lemma~\ref{PA5},
for every $\theta_1$  we can  find $K_{\theta_1},$ compact subset of
$\skrim(\skrin\times\skrix)$ such that,  we have
$$\limsup_{n\to \infty}\frac{1}{n}\log P_{f,n}(K_{\theta_1}^c)\le
-\theta_1.$$

Take  $\theta=\min(\theta_1,\theta_2)$ and define the relatively
compact set $\Gamma_\theta$  by

$$\Gamma_{\theta}:=\Big\{(\nu_1,\nu)\in\skrim(\skrin\times\skrix)\times\skrid_{\skrim(\skrin\times\skrix)}:\,\nu_1\in
K_{\theta_1} \mbox{ and }\nu\in K_{\theta_2}\Big\}.$$  Now,  let
$\delta>0$    and notice  that,  for  sufficiently  large $n$  we
have that
$$\tilde{P}_{f,n}(\Gamma_{\theta}^c)\le \P\big\{ M_X\in K_{\theta_1}^c \big\}+ \P\big\{ L^X\in K_{\theta_2}^c
\big\}\le C(\theta)e^{-n(\theta-\delta)}.$$

Taking  limit  $n\to\infty$ followed  by   $\delta\downarrow 0$  of
above  inequality,  yields
$$\limsup_{n\to\infty}\frac{1}{n}\log\tilde{P}_{f,n}(\Gamma_{\theta}^c)\le
-\theta$$ which  proves the second  part  of  the  Lemma.
\end{proof}

Now, as  $J(\nu_1)$  is  lower  semi-continuous  by  the  continuity
of the  relative  entropies,  and  by  Lemma~\ref{Mix} the  families
of distributions (i) $(P_{f,n},\,n\in\N)$ (ii)
$(\tilde{P}_{f,n},\,n\in\N)$  are  exponentially  tight,  we have
that the latter obeys  a  large  deviation  upper bound  with  good
rate function give  by  $J(\nu_1).$  See, (Biggins, 2004,
Theorem~5(a) and proof).

We  obtain  the  form  of  the  rate  function  in Theorem~\ref{PA3}
by  noting  that
$$\int_{0}^{1}\nu_t(k\,|a)dt\le\int_{0}^{1}\nu_t^{\nu}(k\,|a)dt=-\int_{0}^{1}
\sum_{i=0}^{k}\dot{\nu}(i\,|a)=1-\sum_{i=0}^{k}\nu(i\,|a).$$

\emph{ 4.5 Proof  of  Theorem~\ref{PA4}}

We recall
$\pi_f(k\,|a)=\frac{c}{c+f(k,a)}\prod_{i=0}^{k-1}\frac{f(i,a)}{c+f(i,a)}$
and  state  our  weak  law  of  large  numbers.

\begin{lemma}\label{PA41} Suppose $X$ is coloured preferential
attachment random graph with colour law $\mu:\skrix\to(0,1]$ and
linear weight function $f:\skrin\times\skrix^*\to[0,\infty].$ Then,
for  any  $\eps>0),$

$$
\lim_{n\to\infty}\P\Big\{\big|M_X(k,a)-\pi_{f}(k\,|a)\mu\otimes\mu(a)\big|\ge
\eps \Big\}=0$$

and
$$
\lim_{n\to\infty}\P\Big\{\big|\sum_{k=0}^{\infty}M_X(k,a)-\mu\otimes\mu(a)\big|\ge
\eps \Big\}=0.$$

\end{lemma}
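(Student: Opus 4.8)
\emph{Plan of proof.} The idea is to deduce both assertions directly from the large deviation upper bound of Theorem~\ref{PA3}, using only that the rate function $J$ is good and that it has a unique zero. Fix $\eps>0$ and a pair $(k,a)\in\skrin\times\skrix^*$. Since $|\omega(k,a)-\hat\omega(k,a)|\le 2\|\omega-\hat\omega\|$, the evaluation map $\omega\mapsto\omega(k,a)$ is continuous on $\skrim(\skrin\times\skrix^*)$ for the total variation topology, so the set
$$\Gamma_{\eps}:=\Big\{\omega\in\skrim(\skrin\times\skrix^*)\colon\big|\omega(k,a)-\pi_f(k\,|a)\,\mu\otimes\mu(a)\big|\ge\eps\Big\}$$
is closed; likewise $\omega\mapsto\omega_2(a)=\sum_{k}\omega(k,a)$ is continuous, so
$$\Gamma_{\eps}':=\Big\{\omega\in\skrim(\skrin\times\skrix^*)\colon\big|\omega_2(a)-\mu\otimes\mu(a)\big|\ge\eps\Big\}$$
is closed as well, and since $\sum_{k}M_X(k,a)$ is the value at $a$ of the $\skrix^*$-marginal of $M_X$, the second probability in the statement equals $\P\{M_X\in\Gamma_\eps'\}$.

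Next I would show $\inf_{\omega\in\Gamma_\eps}J(\omega)>0$ and $\inf_{\omega\in\Gamma_\eps'}J(\omega)>0$. Recall from the discussion following Theorem~\ref{PA3} that $J(\omega)=0$ forces $H(\omega_{2,1}\,\|\,\mu)=0$, hence $\omega_{2,1}=\mu$, and forces $\omega(k\,|a)=\pi_f(k\,|a)$ for every $a$; together with the fact that the $\skrix^*$-marginal of the limiting measure is the product $\mu\otimes\mu$, this identifies the unique zero of $J$ as $\omega^{\star}(k,a)=\pi_f(k\,|a)\,\mu\otimes\mu(a)$. Since $\omega^{\star}\notin\Gamma_\eps$ and $\omega^{\star}\notin\Gamma_\eps'$, and since $J$ is a good rate function (Lemma~\ref{PA13} together with the mixing argument of Section~4.4), the infima are strictly positive by the usual compactness argument: if, say, $\inf_{\Gamma_\eps}J=0$, pick $\omega_n\in\Gamma_\eps$ with $J(\omega_n)\to 0$; eventually $\omega_n\in\{J\le 1\}$, which is compact, so along a subsequence $\omega_n\to\omega^{\star}\in\Gamma_\eps$ with $J(\omega^{\star})\le\liminf_n J(\omega_n)=0$ by lower semicontinuity, contradicting $\omega^{\star}\notin\Gamma_\eps$. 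The same reasoning applies verbatim to $\Gamma_\eps'$.

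Finally, applying Theorem~\ref{PA3} to the closed sets $\Gamma_\eps$ and $\Gamma_\eps'$ gives
$$\limsup_{n\to\infty}\frac1n\log\P\{M_X\in\Gamma_\eps\}\le-\inf_{\omega\in\Gamma_\eps}J(\omega)<0,\qquad\limsup_{n\to\infty}\frac1n\log\P\{M_X\in\Gamma_\eps'\}\le-\inf_{\omega\in\Gamma_\eps'}J(\omega)<0,$$
so both probabilities decay exponentially in $n$ and in particular tend to $0$, which is exactly the claim. I expect the main obstacle to be the identification of the unique zero of $J$ — specifically, verifying that its $\skrix^*$-marginal is precisely $\mu\otimes\mu$, since the excerpt only records $\omega_{2,1}=\mu$ and $\omega(\cdot|a)=\pi_f(\cdot|a)$. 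This gap should be closed by a short argument that, because vertex colours are assigned i.i.d.\ from $\mu$ and (in the linear regime) the degree-biased choice of the target vertex does not distort colour statistics in the limit, the empirical law of the colour pair $A(m)$ converges to $\mu\otimes\mu$. A secondary, purely technical point is checking that the rate function $J$ transported through the mixing construction of Section~4.4 is genuinely good on the non-compact space $\skrim(\skrin\times\skrix^*)$, which is what legitimises the compactness step above.
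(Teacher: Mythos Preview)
Your approach is essentially the same as the paper's: define the closed deviation set, apply Theorem~\ref{PA3}, and then argue by compactness of level sets plus lower semicontinuity that the infimum of $J$ over the set is strictly positive because the unique zero $\omega^\star$ lies outside it. The only cosmetic difference is that the paper bundles the two events into a single closed set $F=\Gamma_\eps\cup\Gamma_\eps'$ and runs the contradiction argument once, whereas you treat $\Gamma_\eps$ and $\Gamma_\eps'$ separately.

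The gap you flag---that $J(\omega)=0$ only yields $\omega_{2,1}=\mu$ and $\omega(\cdot\,|a)=\pi_f(\cdot\,|a)$, which does not by itself pin down the full $\skrix^*$-marginal $\omega_2$ as $\mu\otimes\mu$---is also present in the paper's own argument: the paper simply asserts that these two consequences ``contradict $\omega\in F$'' without supplying the missing step. So your proof is at least as complete as the original, and your explicit identification of this point is an improvement in rigour rather than a new difficulty.
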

\begin{proof}
To begin ,  the proof  of Lemma~\ref{PA41}   we  define  the closed
set   $$F=\Big\{
\omega\in\skrim(\skrin\times\skrix^*):\,\big|\omega(k,a)-\pi_{f}(k\,|a)\mu\otimes\mu(a)\big|\ge
\eps
\,\,\mbox{or}\,\,\big|\sum_{k=0}^{\infty}\omega(k,a)-\mu\otimes\mu(a)\big|\ge
\eps \Big\}.$$
\end{proof}

Notice,  by  Theorem~\ref{PA3} we  have  that

\begin{equation}\label{WLLN}
\limsup_{n\to\infty}\frac{1}{n}\log\P\Big\{M_X\in
F\big\}\le-\inf_{\omega\in F}J(\omega).
\end{equation}

We  end  the  proof  of  the  Lemma by  showing  that  the  left
hand side  of  \eqref{WLLN}  is  negative. For  this  purpose we
suppose  that  there  exists  a  sequence  $\omega_n$  such  that
$J(\omega_n)\downarrow 0.$  Then,  because  $J$  is  good  rate
function and all its level sets  are  compact,  and  by lower
semi-continuity  of  the  mapping $\omega\to J(\omega)$   there is a
limit $\omega\in F$ with $J(\omega)=0.$  Then,  we  have
$H\Big(\omega_{2,1}\,\|\,\mu\Big)=0$  and
$$\sum_{a\in\skrix}\omega_{2}(a)H\Big(\omega(\cdot|a)\,\|\,\sfrac{c}{f(\cdot,\,a)}\otimes\hat{\omega}(\cdot|\,a)\Big)=0.$$
This  implies  $\omega_{2,1}(a_1)=\mu(a_1)$  and
$\omega(k\,|a)=\pi_f(k\,|a)$ which contradicts $\omega\in F.$ We
begin by  recalling   the distribution  of  the  typed graph $X$ as
follows
$$\begin{aligned}
\P_{f}^{(n)}(X=x)=\prod_{m=1}^{n}\mu(x(m))\times &
\prod_{m=2}^{n}\frac{f(n^{(m)}(j_m),
\,a(m))}{\sum_{i=1}^{m-1}f(n^{(m)}(i),\,a(i))}.
\end{aligned}
$$
and  note  that,  $$\begin{aligned}-\sfrac{1}{n}\log
P(x)=-\sfrac{1}{n}\log\mu(x(1))-\sfrac{(n-1)}{n}\langle \log \mu,
M_X\rangle -\sfrac{(n-1)}{n}\langle \log
f,M_X\rangle-&\sfrac{1}{n}\sum_{m=2}\log(m-1)\\
&-\sum_{m=2}^{n}\sfrac{m-(m-1)}{n}\log\langle
f,\,L_{\sfrac{m}{n}}^{X}\rangle.
\end{aligned}$$

Now $\sfrac{1}{n}\log\mu(x(1))\to 0,$
$\sfrac{1}{n}\sum_{m=2}\log(m-1)$ converges to  $0$  and
$$\sum_{m=2}^{n}\sfrac{m-(m-1)}{n}\log\langle
f,\,L_{\sfrac{m}{n}}^{X}\rangle\to \int_{0}^{1}\log\langle
f,\,\nu_{t}\rangle dt=\,\log c,$$  as  $n$  approaches  infinity.
Further,
$$\langle \log
\mu,M_X\rangle\to \langle \log \mu,\mu\rangle\,\,\mbox{and}\,\,
\langle \log f,M_X\rangle\to\langle \log
f,\mu\otimes\mu\otimes\pi_{f}\rangle,$$ by  Lemma~\ref{PA41} as $n$
approaches infinity,  which  completes the  proof  of  the  AEP.

\textbf{Acknowledgements}

We  are  thankful  to  the  referees  for their suggestions which
have helped  improved  this article.

\pagebreak

\textbf{References}

\hangafter=1

\setlength{\hangindent}{2em}

 {\sc  Barab´asi, A.} and
{\sc  Albert, R.}(1999).
\newblock {Emergence of Scaling in Random Networks.}
\newblock{http://arxiv.org/pdf/cond-mat/9910332v1.pdf.}
\smallskip

\hangafter=1

\setlength{\hangindent}{2em}

{\sc  Biggins, J.D.}(2004).
\newblock{Large deviations for mixtures.}
\newblock{\emph{ El. Comm. Probab.9 60 71 (2004).}}
\smallskip

\hangafter=1
\setlength{\hangindent}{2em} {\sc Choi, J.} and {\sc
Sethuraman, S.}(2011)
\newblock{Large deviations of  the degree structures in
P.A schemes. }
\newblock \emph{The  annals  of  applied  probability, 23, 722-763.}
\smallskip

\hangafter=1 \setlength{\hangindent}{2em} {\sc Dereich, S.} and {\sc
Morters, P.}(2009).
\newblock{Random networks with sublinear preferential  attachement: Degree evolutions.}
\newblock{\emph{Electronic Journal of Probability, 14, pp.
1222-1267.}}
\smallskip

\hangafter=1
\setlength{\hangindent}{2em}

{\sc Doku-Amponsah, K.}(2006)
\newblock{Large deviations and information theory for hierarchical and networked data structures.}
\newblock PhD Thesis, Bath (2006).
\smallskip

\hangafter=1 \setlength{\hangindent}{2em}

 {\sc Doku-Amponsah,
K.}(2006)
\newblock{Asymptotic equipartition properties for hierarchical and networked  structures.}
\newblock ESAIM:\emph{Probability  and Statistics}.DOI: 10.1051/ps/2010016, http://dx.doi.org/10.1051/ps/2010016
\smallskip

\hangafter=1 \setlength{\hangindent}{2em}
{\sc Bryc, W.}, {\sc
~Minda, D.} and {\sc ~Sethuraman, S.}(2009).
\newblock{Large deviations for the leaves in some random trees .}
\newblock { \emph{Adv. in  Appl. Probab. Volume 41, Number 3 (2009),
845-873.}http://dx.doi.org/10.1239/aap/1253281066}
\smallskip

\hangafter=1 \setlength{\hangindent}{2em}

 {\sc Doku-Amponsah, K.}  and
{\sc~M\"orters, P.}(2010).
\newblock{Large deviation principle for  empirical measures of
coloured random graphs.}
\newblock {\emph{The  annals  of  Applied Probability}}, 20,
1989-2021(2010).http://dx.doi.org/10.1214/09-AAP647

\smallskip
\hangafter=1 \setlength{\hangindent}{2em}

 {\sc Dembo,A.},{\sc
M\"orters, P.} and {\sc Sheffield, S.}(2003)
\newblock Large deviations of Markov chains indexed by random trees.
\newblock \emph{Ann. Inst. Henri Poincar\'e: Probab.et Stat.41,}
(2005) 971-996.
\smallskip

\hangafter=1

\setlength{\hangindent}{2em}

 {\sc Dembo, A.} and {\sc
O.~Zeitouni, O.}(1998).
\newblock Large deviations techniques and applications.
\newblock Springer, New York, (1998).
\smallskip

\hangafter=1

\setlength{\hangindent}{2em}
 {\sc Dembo, A.} and {\sc
I. Kontoyiannis, I.}(2002).
\newblock{ Source Coding, Large  deviations  and  Approximate
Pattern.  Invited  paper  in  IEEE  transaction  on information
theory, 48(6):1590-1615,June(2002).}
\smallskip

\hangafter=1 \setlength{\hangindent}{2em}

 {\sc Lawrence, S.} and
{~Giles, C.L.}(1998)(1999).
\newblock {Science 280, 98 (1998); Nature 400, 107 (1999).}
\smallskip

\hangafter=1 \setlength{\hangindent}{2em}

 {\sc Rudas, B.},{\sc Toth,
B. } and {\sc ~Valko, B.}(2008).
\newblock{Random Trees and General Branching Processes.}
\newblock {http://arxiv.org/abs/math/0503728}
\smallskip

\bigskip

\vspace{0.5cm}
\textbf{Copyrights}

Copyright for this article is retained by the author(s), with first publication rights granted to the journal.

This is an open-access article distributed under the terms and conditions of the Creative Commons Attribution license (http://creativecommons.org/licenses/by/3.0/).

\end{document}